\NewDocumentCommand{\evalat}{sO{\big}mm}{%
  \IfBooleanTF{#1}
   {\mleft. #3 \mright|_{#4}}
   {#3#2|_{#4}}%
}
\newcommand{\norm}[1]{\left\lVert#1\right\rVert}
\newtheorem{proposition}{Proposition}
\newcommand{\diag}{\mathop{\mathrm{diag}}}
\DeclareMathOperator{\Var}{Var}
\DeclareMathOperator{\tr}{tr}
\DeclareMathOperator{\sinc}{sinc}
\def\blfootnote{\xdef\@thefnmark{}\@footnotetext}
\begin{document}
\title{\huge{ Physical Layer Security of RIS-Assisted \\Communications under Electromagnetic Interference}}
\author{Jos\'e~David~Vega-S\'anchez, Georges Kaddoum, and F. Javier L\'opez-Mart\'inez}
\maketitle
\blfootnote{\noindent Manuscript received MONTH xx, YEAR; revised XXX. The review of this paper was coordinated by XXXX. 
The work of F.J. Lopez-Martinez was funded by MCIN/AEI/10.13039/501100011033 through grant PID2020-118139RB-I00, and by Junta de Andalucia (P18-RT-3175).}
\blfootnote{\noindent J.~D.~Vega~S\'anchez is with Departamento de Electr\'onica, Telecomunicaciones y Redes de Informaci\'on, Escuela Polit\'ecnica Nacional (EPN),
Quito,  170525, Ecuador. (e-mail: $\rm jose.vega01@epn.edu.ec$).}
\blfootnote{\noindent Georges Kaddoum is with Department of Electrical Engineering, ETS, University of Quebec, Montreal, QC H3C 1K3, Canada. (e-mail: $\rm georges.kaddoum@etsmtl.ca$).}

\blfootnote{\noindent F.~J. L\'opez-Mart\'inez is with Communications and Signal Processing Lab, Telecommunication Research Institute (TELMA), Universidad de M\'alaga, E.T.S. Ingenier\'ia de Telecomunicaci\'on, Bulevar Louis Pasteur 35, 29010
M\'alaga (Spain). (e-mail: $\rm fjlopezm@ic.uma.es$).}

\vspace{-12.5mm}
\begin{abstract}
This work investigates the impact of the ever-present electromagnetic interference (EMI) on the achievable secrecy performance of reconfigurable intelligent surface (RIS)-aided communication systems. We characterize the end-to-end RIS channel by considering key practical aspects such as spatial correlation, transmit beamforming vector, phase-shift noise, the coexistence of direct and indirect channels, and the presence of strong/mild EMI on the receiver sides. We show that the effect of EMI on secrecy performance strongly depends on the ability of the eavesdropper to cancel such interference; this puts forth the potential of EMI-based attacks to degrade physical layer security in RIS-aided communications.  
\end{abstract}

\begin{IEEEkeywords}
 reconfigurable intelligent surfaces, electromagnetic interference, physical layer security.
\end{IEEEkeywords}

\vspace{-2.5mm}
\section{Introduction}
Reconfigurable intelligent surfaces (RIS) have drawn full attention thanks to their outstanding potential to enhance coverage, spectral/energy-efficiencies and security of forthcoming wireless systems. An RIS is usually built as a planar metasurface containing a large number of simple, nearly passive reflecting elements. These can be dynamically configured to tune the phases and amplitudes of the impinging waves on the RIS, helping to overcome the detrimental effects of the wireless channel \cite{Wu}. While RISs have a tremendous potential to be a game-changer technology on the verge of 6G, practical impairments related to their implementation and deployment are known to limit their performance; these are the cases of imperfect phase-shift compensation \cite{Badiu}, spatial correlation \cite{emilcorr} or, very recently, electromagnetic interference (EMI) \cite{andrea}.

One of the potential use cases of RIS is physical layer security (PLS) \cite{Yan2021}, as a way to remarkably improve network security by exploiting the inherent randomness (e.g., noise, fading, interference) \cite{RefPLS} of the wireless propagation medium. A large body of research has been motivated by the potential of integrating both RIS and PLS to enable secure and intelligent radio environments \cite{PLSRIS1,Yang2020,Zhang2021,PLSRIS2}. However, even though the effect of the EMI present in any wireless environment is known to affect RIS-aided communications \cite{andrea}, its effects in physical layer security have not been analyzed yet. Since EMI can appear from intentional and non-intentional (e.g., natural pollution/radiation) causes \cite{andrea}, it can play a key role in compromising the security of communications. From a PLS perspective, intentional signals can be induced in a cooperative jamming-eavesdropping fashion. In this situation, the eavesdropper may be able to cancel out such interference, which can provide an advantage over the legitimate node when decoding the information.

Based on the previous considerations, in this work we explore the PLS of RIS-assisted wireless communications affected by EMI. 
We consider a practical RIS scenario on which key aspects such as spatial correlation, phase-shift errors, and the coexistence of direct and indirect channels are considered, together with EMI. All these factors are integrated into statistical approximations for the equivalent signal-to-noise ratios (SNRs) of interest, which allow us to derive closed-form solutions for the secrecy outage probability (SOP) of the underlying system, assuming different considerations of EMI-awareness at the end nodes. Finally, through illustrative examples, we provide valuable insights on the role of EMI for the system's secure performance. 

\noindent\textit{Notation and terminology:} Upper and lower-case bold letters denote matrices and
vectors; $f_{(\cdot)}(\cdot)$ denotes a probability density function (PDF); $F_{(\cdot)}(\cdot)$ is a cumulative density function (CDF); $\mathcal{U}[a,b]$ denotes a uniform distribution on $[a,b]$; $\mathcal{C}\mathcal{N}(\cdot ,\cdot )$ is the circularly symmetric complex Gaussian distribution; $\mathbb{C}$ denotes the set of complex numbers;  $\mathbb{E}[\cdot]$ is expectation operator; $\Var\left [\cdot\right ]$ is variance; $\Gamma(\cdot)$ is the gamma function~\cite[Eq.~(6.1.1)]{Abramowitz}; $\Upsilon(\cdot,\cdot)$, the lower incomplete gamma function~\cite[Eq.~(6.5.2)]{Abramowitz}; ${}_2F_1\left(\cdot,\cdot;\cdot;\cdot\right)$ is Gauss hypergeometric function~\cite[Eq.~(15.1.1)]{Abramowitz}; $\mathcal{B}\left (\cdot,\cdot\right )$ is the Beta function~\cite[Eq.~(6.2.2)]{Abramowitz}; $\diag\left ( {\bf{x}} \right )$ is a diagonal matrix with diagonal given by ${\bf{x}}$; ${\bf{I}}_{{ N}}$ is the identity matrix of size $N\times N$; $\norm{\cdot}$ is the Euclidean norm of a complex vector; $\left ( \cdot \right )^{\rm H}$ is the  Hermitian transpose; $\rm{mod} \left ( \cdot \right )$ is the modulus operation; $\left \lfloor \cdot   \right \rfloor$ is the floor function, and $\sinc(w)=\sin(\pi w)/(\pi w)$ is the sinc function. 
\vspace{-2mm}
\section{System and Channel Models}
We consider an RIS-assisted wiretap system consisting of a transmitter node Alice ($\mathrm{A}$) with $M$ antennas communicating with a single-antenna legitimate receiver Bob ($\mathrm{B}$) via an RIS equipped with $N$ nearly passive reconfigurable elements in the presence of a single-antenna
eavesdropper Eve ($\mathrm{E}$), as illustrated in Fig.~\ref{sistema1}. We also assume that the RIS is subject to EMI, which is produced by controllable/uncontrollable sources in the far-field of the RIS. According to \cite{andrea}, the received signals at both $\mathrm{B}$ and $\mathrm{E}$ can be written as
\begin{equation}
\label{eq1}
y_i=\sqrt{P} \left ( {\bf{h}}_{2,i}^{\rm H}\bm{\Psi}{\bf{G}} + {\bf{h}}_{{\rm{d}},i}^{\rm H} \right ) {\bf{w}}x+ {\bf{h}}_{2,i}^{\rm H}\bm{\Psi}\bm{\nu}  +\widetilde{n}_i, 
\end{equation}
where $i \in \left \{ \mathrm{B},\mathrm{E} \right \}$ indicates either the legitimate or the eavesdropper channels, $P$ is the transmit power at $\mathrm{A}$, $x$ is the transmitted signal with $\mathbb{E}\{|x|^2\}=1$,
and $\widetilde{n}_i\sim \mathcal{C}\mathcal{N}(0,\sigma^2_{i})$ is the additive white Gaussian noise with $\sigma^2_{i}$ power. The EMI effect is denoted by the vector $\bm{\nu} \in \mathbb{C}^{N\times 1}$, ${\bf{h}}_{{\rm d},i} \in \mathbb{C}^{M\times 1}$ refers to the direct channel between $\mathrm{A}$-to-$\mathrm{B}$ or $\mathrm{A}$-to-$\mathrm{E}$,  $\bf{w}$ $\in \mathbb{C}^{M\times 1}$ denotes the beamforming vector at $\mathrm{A}$, ${\bf{G}}= \left [ \bf{g}_1,\dots, \bf{g}_{M} \right ] \in \mathbb{C}^{N\times M}$ and ${\bf{h}}_{2,i}=\left [ h_{2i,1},\dots,  h_{2i,N} \right ]^{\rm H} \in \mathbb{C}^{N\times 1}$ represent the channel coefficients for the paths $\mathrm{A}$-to-RIS and either RIS-to-$\mathrm{B}$ or RIS-to-$\mathrm{E}$, respectively.
Moreover, $\bm{\Psi}=\diag\left ( e^{j\phi_{1}},\dots, e^{j\phi_{N}}\right )$ indicates the phase-shift matrix induced by the RIS elements. For the sake of simplicity in the discussion, we assume that the RIS does not attenuate the reflected signals. 
Assuming that the legitimate agents $\mathrm{A}$ and $\mathrm{B}$ are not aware of the presence of $\mathrm{E}$, the RIS designs the phases-shifts of each reflecting element so that the signals arriving at $\mathrm{B}$ are aligned, i.e. $\phi_n=\angle( {\bf{h}}_{{\rm{d}},\mathrm{B}}^{\rm H}{\bf{w}})-\angle({ h}_{2,\mathrm{B},n}^{\rm H})-\angle({\bf{g}}_n{\bf{w}})$~\cite[Eq.~(19)]{MRT} where ${ h}_{2,\mathrm{B},n}^{\rm H}$ is the $n$th element of ${\bf{h}}_{2,\mathrm{B}}^{\rm H}$ and ${\bf{g}}_n$ is the $n$th row vector of ${\bf{G}}$. In such a setup, the phases $\phi_{1\ldots n}$ and the transmit beamforming vector $\bf{w}$ are jointly optimized \cite{MRT} to maximize the received signal-to-noise ratio (SNR) at $\mathrm{B}$. However, in practice, discrete phase-shifts and imperfect channel information at the RIS induces a residual random phase noise. We denote this error term as $\varphi_n$, so the designed phase-shifts of the $n$th RIS element deviate from the optimal ones as $\Delta_n=\phi_n+\varphi_n$ \cite{Badiu}.
 Based on the above, the composite channel observed by the receiver nodes without considering the EMI effect can be formulated as  \vspace{-2mm}
\begin{equation}
\label{eq2}
h_i=\left ( {\bf{h}}_{2,i}^{\rm H}\bm{\Phi}{\bf{G}} + {\bf{h}}_{{\rm{d}},i}^{\rm H} \right ) \bf{w}^{*}, 
\vspace{-1mm}
\end{equation}
\begin{figure}[t]
\centering 
\psfrag{A}[Bc][Bc][0.7]{Alice}
\psfrag{K}[Bc][Bc][0.7]{${\bf{G}}$}
\psfrag{Z}[Bc][Bc][0.7][-18.5]{${\bf{h}}_{\rm d,\mathrm{E}}$}
\psfrag{L}[Bc][Bc][0.7][-16]{${\bf{h}}_{\rm d,\mathrm{B}}$}
\psfrag{H}[Bc][Bc][0.7][70]{${\bf{h}}_{2,\mathrm{B}}$}
\psfrag{B}[Bc][Bc][0.7]{Bob}
\psfrag{E}[Bc][Bc][0.7]{Eve}
\psfrag{U}[Bc][Bc][0.7][80]{${\bf{h}}_{2,\mathrm{E}}$}
\psfrag{W}[Bc][Bc][0.7][-30]{RIS}
\psfrag{P}[Bc][Bc][0.7]{Electromagnetic}
\psfrag{Q}[Bc][Bc][0.7]{Interference}
\psfrag{R}[Bc][Bc][0.7]{$\bm{\Psi}$}
\psfrag{Main channel}[Bc][Bc][0.7]{Main channel} 
\psfrag{Wiretap channel}[Bc][Bc][0.7]{Wiretap channel} 
\includegraphics[width=0.7\linewidth]{./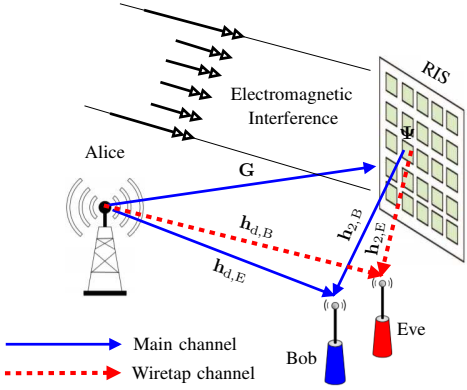} \caption{RIS-aided wiretap MISO system under EMI.}
\label{sistema1}
\end{figure}
where ${\bm{\Phi}}=\diag( e^{j\Delta_{1}},\dots, e^{j\Delta_{N}} )$ is the imperfect phase-shift matrix on the RIS and ${\bf{w}}^{*} =  \tfrac{\left ( {\bf{h}_{2,\mathrm{B}}^{\rm H}}{\bm{\Phi}}{\bf{G}} + {\bf{h}}_{{\rm{d}},\mathrm{B}}^{\rm H} \right )^{\rm H}}{\norm{ {\bf{h}_{2,\mathrm{B}}^{\rm H}}{\bm{\Phi}}{\bf{G}} + {\bf{h}}_{{\rm{d}},\mathrm{B}}^{\rm H} } }$ $ \in \mathbb{C}^{M\times 1} $ is the maximum ratio transmission (MRT) beamforming vector computed using the distributed algorithm described in~\cite{MRT}. Under isotropic scattering, the channel coefficients in \eqref{eq2} including spatial correlation on the RIS are formulated as \cite{emilcorr}
\vspace{-1.7mm}
\begin{align}\label{eq3}
{\bf{h}}_{{\rm d},i}&\sim \mathcal{C}\mathcal{N}\left (\bm{0}_{M}, \beta_{{\rm d},i}\vspace{0.8mm} {\bf{I}}_{ M}\right ), \hspace{2mm}  {\bf{h}}_{2,i}\sim \mathcal{C}\mathcal{N}\left ( \bm{0}_{ N},A \beta_{2,i} {\bf{R}} \vspace{0.8mm} {\bf{I}}_{ N}\right )  \nonumber \\ {\bf{g}}_{\rm q}&\sim \mathcal{C}\mathcal{N}\left ( \bm{0}_{N},A \beta_1 {\bf{R}} \vspace{0.8mm} {\bf{I}}_{ N}\right ) \ \text{for} \ q=\left \{ 1,\dots,M \right \},   
\end{align}
wherein $\beta_1$, $\beta_{2,i}$, and $\beta_{{\rm d},i}$ encompass the average path loss attenuation for the links between $\mathrm{A}$-to-RIS, RIS-to-$i$, and $\mathrm{A}$-to-$i$, respectively. Moreover, $A=d_{\rm H}d_{\rm V}$ is the area of a RIS element, where $d_{\rm H}$ is the horizontal width and $d_{\rm V}$ refers to the vertical height, and ${\bf{R}} \in \mathbb{C}^{N\times N}$ is the spatial correlation matrix on the RIS. To compute ${\bf{R}}$, we resort to the scheme in \cite{emilcorr} for a practical RIS setup. Hence, we assume a rectangular RIS geometry with $N=N_{\rm V} N_{\rm H}$ elements, where $N_{\rm V}$ and $N_{\rm H}$ indicate the number of elements per row and per column, respectively. In this context, and under isotropic Rayleigh fading, the $(a,b)$th entry of the spatial correlation matrix ${\bf{R}}$ can be expressed as 
\begin{equation}\label{eq4}
{\left [{\bf{R}}  \right ] }_{a,b}= \sinc\left ( 2\left \| {\bf{u}}_a-{\bf{u}}_b \right \| /\lambda \right )     \hspace{2mm}  a,b=1,\dots,N
\end{equation}
in which ${\bf{u}}_\zeta =\left [ 0, \rm{mod}\left ( \zeta -1,\mathrm{N}_H\right )d_{\rm H}, \left \lfloor \left (\zeta -1 \right )/N_{\rm H} \right \rfloor d_{\rm V} \right ]^{T}$, $\zeta  \in \left \{ a,b \right \}$, and $\lambda$ is the wavelength. Before getting into the received SNR formulations at the end nodes, the distribution of the EMI needs to be specified. Therefore, based on~\cite[Corollary 1]{andrea}, the EMI, denoted by $\bm{\nu}$, is distributed as
\begin{equation}\label{eq5}
\bm{\nu} \sim \mathcal{C}\mathcal{N}\left (\bm{0}_{ N}, A \sigma^2_{\mathrm{EMI}}\vspace{0.8mm} {\bf{R}} {\bf{I}}_{ N}\right ),
\end{equation}
where ${\bf{R}}$ under isotropic conditions is also given by \eqref{eq4}, and $\sigma^2_{\mathrm{EMI}}$ is the EMI power produced by incoming waves on the RIS, generated by external (either intentional or non-intentional) sources. With the above formulations and from \eqref{eq1}, the received SNR at $\mathrm{B}$ or $\mathrm{E}$ is given by \vspace{-2mm}
\begin{align}\label{eq6}
\gamma_{i}=&\frac{P | {{h_i}}|^2}{A \delta_i \sigma^2_{\mathrm{EMI}} {\bf{h}}_{2,i}^{\rm H}\bm{\Phi} {\bf{R}}{\bf{h}}_{2,i} +\sigma_{i}^2}=\overline{\gamma}_i X_i
\end{align}
where $X_i=\tfrac{| {{h_i}}|^2}{
    \frac{ A \delta_i \sigma^2_{\mathrm{EMI}}}{\sigma_{i}^2} {\bf{h}}_{2,i}^{\rm H}\bm{\Phi} {\bf{R}}{\bf{h}}_{2,i} +1}$ and for the sake of convenience, we define $\overline{\gamma}_i=P/\sigma_{i}^2$ as the average transmit SNRs for $\mathrm{B}$ or $\mathrm{E}$, respectively. The parameter $\delta_i\in \left \{ 0,1 \right \}$ is defined for convenience, in order to incorporate two situations: for $\delta_{\rm E}=0$, we have an EMI-aware eavesdropper which is capable to cancel out the interference, when it is intentionally originated from a colluding jammer. For $\delta_{\rm E}$=1, the EMI-unaware eavesdropper is affected by EMI just like the legitimate receiver. In all instances $\delta_{\rm B}=1$, implying that the legitimate receiver does not have the ability to cancel out interference. Now, for the subsequent analytical derivations, we need to find an approximate statistical distribution  
    for $X_i$. To that end, we resort to the Moment-Matching Method (MoM) as described in the following section.
\section{Approximate RIS Channel Modeling}
Although the MoM can be used for any target distribution, we formally demonstrate that Gamma and Exponential distributions offer excellent performance for approximating the equivalent RIS channels. Hence, in the following Propositions, we approximate $X_i$ as follows:
\begin{proposition}\label{Propos1}
The distribution of $X_\mathrm{B}$ can be approximated by a Gamma distribution, which is characterized by two
parameters $k_\mathrm{B}$ and $\theta_\mathrm{B}$, i.e.,
\begin{align}\label{eq7}
X_\mathrm{B}\sim\text{\rm Gamma}(k_\mathrm{B},\theta_\mathrm{B}), 
\end{align}
where
\vspace{-4mm}
\begin{align}\label{eq8}
k_\mathrm{B}=\frac{\varsigma_\mathrm{B}^2}{\iota_\mathrm{B}-\varsigma_\mathrm{B}^2}, \hspace{2mm} \theta_\mathrm{B}=\frac{\iota_\mathrm{B}-\varsigma_\mathrm{B}^2}{\varsigma_\mathrm{B}},
\end{align}
and
\vspace{-4mm}
\begin{align}\label{eq9}
\iota_\mathrm{B}=&\frac{  2\beta_{{\rm d},\mathrm{B}}^2M+4\beta_{{\rm d},\mathrm{B}}\omega\tr({\bf{\Theta}})+2|\omega\tr({\bf{\Theta}})|^2+2\omega^2\tr({\bf{\Theta}}^2)}{\tfrac{ A \sigma^2_{\mathrm{EMI}}}{\sigma_{\mathrm{B}}^2}\delta_{\rm B} \alpha_\mathrm{B}+2\delta_{\rm B} \alpha_\mathrm{B}+1 } \nonumber \\
\varsigma_\mathrm{B}=&\frac{ \beta_{{\rm d},\mathrm{B}}M+\omega \tr({\bf{\Theta}})}{\delta_{\rm B} \alpha_\mathrm{B}+1}, \hspace{2mm} \textit{for} \hspace{2mm} \omega=\beta_1 \beta_{2,\mathrm{B}}MA^2,
\end{align}
and ${\bf{\Theta}}={\bf{R}}^{\rm H} \bm{\Phi}{\bf{R}}{\bf{R}}^{\rm H}\bm{\Phi}^{\rm H}{\bf{R}}$, and $\alpha_i=\tfrac{ A^2 \sigma^2_{\mathrm{EMI}}}{\sigma_{i}^2} \beta_{2,i} \tr\left (  {\bf{R}}^{\rm H} \bm{\Phi}{\bf{R}}\bm{\Phi}^{\rm H}{\bf{R}}\right )$ for $i \in \left \{ \mathrm{B},\mathrm{E} \right \}$.
\end{proposition}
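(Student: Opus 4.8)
The plan is to obtain $k_\mathrm{B}$ and $\theta_\mathrm{B}$ from a two-moment match to the Gamma law and then reduce the whole claim to computing the first two moments of $X_\mathrm{B}$. Since a $\mathrm{Gamma}(k,\theta)$ variable has mean $k\theta$ and variance $k\theta^2$, writing $\varsigma_\mathrm{B}=\mathbb{E}[X_\mathrm{B}]$ and $\iota_\mathrm{B}=\mathbb{E}[X_\mathrm{B}^2]$ gives $k_\mathrm{B}\theta_\mathrm{B}=\varsigma_\mathrm{B}$ and $k_\mathrm{B}\theta_\mathrm{B}^2=\iota_\mathrm{B}-\varsigma_\mathrm{B}^2$, which invert at once to the stated $k_\mathrm{B}=\varsigma_\mathrm{B}^2/(\iota_\mathrm{B}-\varsigma_\mathrm{B}^2)$ and $\theta_\mathrm{B}=(\iota_\mathrm{B}-\varsigma_\mathrm{B}^2)/\varsigma_\mathrm{B}$. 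The content of the proposition is therefore the identification of $\varsigma_\mathrm{B}$ (first moment) and $\iota_\mathrm{B}$ (second moment) with the stated closed forms, the Gamma shape being an approximation justified a posteriori.

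Write $X_\mathrm{B}=N/D$ with $N=|h_\mathrm{B}|^2$ and $D=\tfrac{A\delta_\mathrm{B}\sigma^2_{\mathrm{EMI}}}{\sigma_\mathrm{B}^2}{\bf{h}}_{2,\mathrm{B}}^{\rm H}\bm{\Phi}{\bf{R}}{\bf{h}}_{2,\mathrm{B}}+1$ as in \eqref{eq6}. The key difficulty is that $N$ and $D$ are statistically dependent, since both contain ${\bf{h}}_{2,\mathrm{B}}$. I would break this with the ratio-of-moments decoupling $\mathbb{E}[N^m/D^m]\approx\mathbb{E}[N^m]/\mathbb{E}[D^m]$ for $m\in\{1,2\}$, so that $\varsigma_\mathrm{B}=\mathbb{E}[N]/\mathbb{E}[D]$ and $\iota_\mathrm{B}=\mathbb{E}[N^2]/\mathbb{E}[D^2]$. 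This is the single genuine approximation in the derivation and the point one has to be willing to defend.

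For the numerator, the MRT vector ${\bf{w}}^{*}$ collapses $h_\mathrm{B}$ to $\norm{{\bf{h}}_{2,\mathrm{B}}^{\rm H}\bm{\Phi}{\bf{G}}+{\bf{h}}_{{\rm d},\mathrm{B}}^{\rm H}}$, so $N=\norm{{\bf{a}}+{\bf{b}}}^2$ with ${\bf{a}}={\bf{G}}^{\rm H}\bm{\Phi}^{\rm H}{\bf{h}}_{2,\mathrm{B}}$ (cascaded reflected path) and ${\bf{b}}={\bf{h}}_{{\rm d},\mathrm{B}}$ (direct path), which are independent and zero-mean. Expanding $N$ and $N^2$ and discarding the odd cross terms leaves $\mathbb{E}[N]=\mathbb{E}\norm{{\bf{a}}}^2+\mathbb{E}\norm{{\bf{b}}}^2$ and $\mathbb{E}[N^2]$ as a combination of $\mathbb{E}\norm{{\bf{a}}}^4$, $\mathbb{E}\norm{{\bf{b}}}^4$, $\mathbb{E}\norm{{\bf{a}}}^2\mathbb{E}\norm{{\bf{b}}}^2$ and $\mathbb{E}|{\bf{a}}^{\rm H}{\bf{b}}|^2$. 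I would evaluate these by iterated expectation, first over ${\bf{G}}$ and ${\bf{h}}_{{\rm d},\mathrm{B}}$ and then over ${\bf{h}}_{2,\mathrm{B}}$, using the covariances in \eqref{eq3} and the complex Isserlis/Wick theorem. The direct path yields $\mathbb{E}\norm{{\bf{b}}}^2=\beta_{{\rm d},\mathrm{B}}M$, while the reflected path, a product of two ${\bf{R}}$-correlated Gaussians routed through $\bm{\Phi}$, yields $\mathbb{E}\norm{{\bf{a}}}^2=\omega\tr({\bf{\Theta}})$, so $\mathbb{E}[N]$ reproduces the numerator of $\varsigma_\mathrm{B}$. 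The fourth-order Wick pairing of the reflected path is precisely what splits into the two distinct trace contributions $|\omega\tr({\bf{\Theta}})|^2$ and $\omega^2\tr({\bf{\Theta}}^2)$ appearing in $\iota_\mathrm{B}$, with ${\bf{\Theta}}$ emerging from collapsing the chained ${\bf{R}}$ and $\bm{\Phi}$ factors.

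For the denominator, note that \eqref{eq6} already encodes the EMI power (averaged against the covariance in \eqref{eq5}) as the Gaussian quadratic form $Q={\bf{h}}_{2,\mathrm{B}}^{\rm H}\bm{\Phi}{\bf{R}}{\bf{h}}_{2,\mathrm{B}}$. Then $\mathbb{E}[D]$ and $\mathbb{E}[D^2]$ follow from the standard mean and second-moment identities for a complex Gaussian quadratic form (schematically $\tr({\bf{M}})$ and $(\tr{\bf{M}})^2+\tr({\bf{M}}^2)$ for the effective kernel ${\bf{M}}$), which produce the $\alpha_\mathrm{B}$-dependent denominators of $\varsigma_\mathrm{B}$ and $\iota_\mathrm{B}$. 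Assembling the four moments then gives the claim. I expect the bookkeeping of the numerator's fourth-order expansion — tracking the direct/reflected cross-moments and folding the nested ${\bf{R}},\bm{\Phi}$ factors into $\tr({\bf{\Theta}})$ and $\tr({\bf{\Theta}}^2)$ — to be the most laborious step, whereas the $N$–$D$ decoupling is the main approximation-theoretic weak point.
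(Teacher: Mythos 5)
Your proposal is correct and follows essentially the same route as the paper: a two-moment match to the Gamma law, the decomposition of $X_\mathrm{B}$ into the numerator $|h_\mathrm{B}|^2$ and the EMI-plus-noise denominator, the ratio-of-moments decoupling $\mathbb{E}[Y/Z]\approx\mathbb{E}[Y]/\mathbb{E}[Z]$ (which the paper likewise flags as the key approximation, stated as a lower bound), and Isserlis/Wick evaluation of the second and fourth-order Gaussian moments yielding the $\tr({\bf\Theta})$ and $\tr({\bf\Theta}^2)$ terms. The only cosmetic difference is that you parameterize via raw moments $(\varsigma_\mathrm{B},\iota_\mathrm{B})=(\mathbb{E}[X_\mathrm{B}],\mathbb{E}[X_\mathrm{B}^2])$ while the paper uses mean and variance, which is algebraically equivalent.
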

\begin{proof}
See Appendix~\ref{appendix1}.
\end{proof}
\begin{proposition}\label{coro1}
The distribution of $X_\mathrm{E}$ can be approximated as a Exponential distribution, i.e.,
\begin{align}\label{eq10}
X_\mathrm{E}\sim\text{\rm Exponential}(\theta_\mathrm{E}),
\end{align}
where
\vspace{-4mm}
\begin{align}\label{eq11}
\theta_\mathrm{E}=\frac{A^2\beta_1 \beta_{\rm{d},\mathrm{E}}\beta_{2,\mathrm{B}}M \tr\left ({\bf{R}}^{\rm H}\bm{\Phi}^{\rm H}{\bf{R}}{\bf{R}}^{\rm H}\bm{\Phi} {\bf{R}}\right ) +\epsilon+\xi+\varrho}{\left( \delta_{\rm E}\alpha_\mathrm{E}+1\right)\left(\beta_{{\rm d},\mathrm{B}}M+A^2\beta_1 \beta_{2,\mathrm{B}}M \tr\left ({\bf{\Theta}}\right ) \right)},
\end{align}
in which $\epsilon=A^4\beta_1^2 \beta_{2,\mathrm{B}}\beta_{2,\mathrm{E}}M \tr\left (\Upsilon \Upsilon  \right )$, $\xi=A^2\beta_1 \beta_{2,\mathrm{E}}\beta_{\rm{d},\mathrm{B}}M \tr\left (\Upsilon\right )$, and $\varrho=M\beta_{\rm{d},\mathrm{B}}\beta_{\rm{d},\mathrm{E}}$ for $\Upsilon={\bf{R}}^{\rm H} \bm{\Phi}{\bf{R}}{\bf{R}}^{\rm H}\bm{\Phi}^{\rm H}{\bf{R}}$.
\end{proposition}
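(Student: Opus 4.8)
The plan is to apply the same Moment-Matching Method as in Proposition~\ref{Propos1}, but exploiting the fact that for the eavesdropper the transmit beamformer ${\bf{w}}^{*}$ is matched to Bob's composite channel rather than to Eve's. Writing $\tilde{\bf{b}}={\bf{h}}_{2,\mathrm{B}}^{\rm H}\bm{\Phi}{\bf{G}}+{\bf{h}}_{{\rm d},\mathrm{B}}^{\rm H}$ so that ${\bf{w}}^{*}=\tilde{\bf{b}}^{\rm H}/\norm{\tilde{\bf{b}}}$, and $\tilde{\bf{a}}={\bf{h}}_{2,\mathrm{E}}^{\rm H}\bm{\Phi}{\bf{G}}+{\bf{h}}_{{\rm d},\mathrm{E}}^{\rm H}$, the effective eavesdropper channel in \eqref{eq2} becomes $h_\mathrm{E}=\tilde{\bf{a}}\tilde{\bf{b}}^{\rm H}/\norm{\tilde{\bf{b}}}$. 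Since Eve's fading enters only through the projection of $\tilde{\bf{a}}$ onto a direction fixed by Bob's (independent) fading and by ${\bf{G}}$, conditioned on ${\bf{w}}^{*}$ the scalar $h_\mathrm{E}$ is zero-mean complex Gaussian, so $|h_\mathrm{E}|^2$ is conditionally exponential. This heuristically justifies a shape parameter close to one, i.e.\ an Exponential fit, which is fully specified by its first moment; hence I would simply set $\theta_\mathrm{E}=\mathbb{E}[X_\mathrm{E}]$.

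To evaluate this mean I would write $X_\mathrm{E}=|\tilde{\bf{a}}\tilde{\bf{b}}^{\rm H}|^2/(\norm{\tilde{\bf{b}}}^2\,D_\mathrm{E})$, where $D_\mathrm{E}$ denotes the EMI-plus-noise denominator of \eqref{eq6}, and replace the expectation of the ratio by the ratio of expectations, $\theta_\mathrm{E}\approx \mathbb{E}[|\tilde{\bf{a}}\tilde{\bf{b}}^{\rm H}|^2]/(\mathbb{E}[\norm{\tilde{\bf{b}}}^2]\,\mathbb{E}[D_\mathrm{E}])$. Two of the three expectations are immediate: $\mathbb{E}[D_\mathrm{E}]=\delta_\mathrm{E}\alpha_\mathrm{E}+1$ and $\mathbb{E}[\norm{\tilde{\bf{b}}}^2]=\beta_{{\rm d},\mathrm{B}}M+A^2\beta_1\beta_{2,\mathrm{B}}M\tr({\bf{\Theta}})$ both follow from the quadratic-form identity $\mathbb{E}[{\bf{h}}^{\rm H}{\bf{M}}{\bf{h}}]=\tr({\bf{M}}\,\mathrm{Cov}({\bf{h}}))$ with the covariances in \eqref{eq3}, the latter being exactly Bob's mean numerator from Proposition~\ref{Propos1}. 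Crucially, $\norm{\tilde{\bf{b}}}^2$ and $D_\mathrm{E}$ are exactly independent (Bob's fading and ${\bf{G}}$ versus Eve's ${\bf{h}}_{2,\mathrm{E}}$), so their product factorizes without further approximation and reproduces the denominator of \eqref{eq11}.

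The crux is the numerator $\mathbb{E}[|\tilde{\bf{a}}\tilde{\bf{b}}^{\rm H}|^2]$. Expanding the product yields four summands, $T_1={\bf{h}}_{2,\mathrm{E}}^{\rm H}\bm{\Phi}{\bf{G}}{\bf{G}}^{\rm H}\bm{\Phi}^{\rm H}{\bf{h}}_{2,\mathrm{B}}$, $T_2={\bf{h}}_{2,\mathrm{E}}^{\rm H}\bm{\Phi}{\bf{G}}{\bf{h}}_{{\rm d},\mathrm{B}}$, $T_3={\bf{h}}_{{\rm d},\mathrm{E}}^{\rm H}{\bf{G}}^{\rm H}\bm{\Phi}^{\rm H}{\bf{h}}_{2,\mathrm{B}}$, and $T_4={\bf{h}}_{{\rm d},\mathrm{E}}^{\rm H}{\bf{h}}_{{\rm d},\mathrm{B}}$; because all fading vectors are zero-mean and independent (apart from the shared ${\bf{G}}$), the cross-terms $\mathbb{E}[T_mT_n^{*}]$ with $m\neq n$ vanish and $\mathbb{E}[|\tilde{\bf{a}}\tilde{\bf{b}}^{\rm H}|^2]=\sum_{m}\mathbb{E}[|T_m|^2]$. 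I expect these to map onto $\epsilon$, $\xi$, the $\tr(\Upsilon)$ term, and $\varrho$ of \eqref{eq11}, respectively: the last is elementary, $\mathbb{E}[|T_4|^2]=M\beta_{{\rm d},\mathrm{B}}\beta_{{\rm d},\mathrm{E}}=\varrho$, while $\mathbb{E}[|T_2|^2]$ and $\mathbb{E}[|T_3|^2]$ are second-order averages that collapse to a single trace $\tr(\Upsilon)$ after two applications of the quadratic-form identity (once for the inner vector, once over the columns of ${\bf{G}}$). The genuinely hard term is $\mathbb{E}[|T_1|^2]$: it couples the two independent RIS channels ${\bf{h}}_{2,\mathrm{E}}$ and ${\bf{h}}_{2,\mathrm{B}}$ through the Wishart-type matrix ${\bf{G}}{\bf{G}}^{\rm H}$, so its evaluation requires a fourth-order Gaussian (Isserlis/Wick) contraction, taken first over the two RIS vectors conditioned on ${\bf{G}}$ and then over the i.i.d.\ columns ${\bf{g}}_q$. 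Reducing the resulting products of $\bm{\Phi}$ and ${\bf{R}}$ to the compact form $\epsilon=A^4\beta_1^2\beta_{2,\mathrm{B}}\beta_{2,\mathrm{E}}M\tr(\Upsilon\Upsilon)$, using the cyclic property of the trace and the Hermitian (real-symmetric) structure of ${\bf{R}}$, is the main obstacle of the whole derivation.

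Finally, I would insert the four surviving contributions together with the two denominator factors into the ratio to recover \eqref{eq11}. Beyond the exponential-shape assumption of the first step, the sole approximation is the replacement of $\mathbb{E}[\text{num}/\text{den}]$ by $\mathbb{E}[\text{num}]/\mathbb{E}[\text{den}]$; since $\norm{\tilde{\bf{b}}}^2$ and $D_\mathrm{E}$ are exactly independent, what is neglected is only the weak correlation between the numerator and $\norm{\tilde{\bf{b}}}^2$ (both contain ${\bf{G}}$), whose accuracy I would confirm a posteriori against Monte-Carlo simulations.
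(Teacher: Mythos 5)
Your proposal follows essentially the same route as the paper's Appendix: match an Exponential via $\theta_\mathrm{E}=\mathbb{E}[X_\mathrm{E}]$, replace the expectation of the ratio by the ratio of expectations (with $\mathbb{E}[Z_\mathrm{E}]=\delta_\mathrm{E}\alpha_\mathrm{E}+1$ and $\mathbb{E}[\norm{\tilde{\bf{b}}}^2]$ taken from Proposition~\ref{Propos1}), and expand $\mathbb{E}[|\tilde{\bf{a}}\tilde{\bf{b}}^{\rm H}|^2]$ into the four surviving squared terms whose cross-terms vanish by independence, yielding exactly the paper's $p,q,r,s$ contributions ($\epsilon$, the two $\tr(\Upsilon)$-type terms, and $\varrho$). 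The argument and the resulting expression coincide with the paper's proof.
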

\begin{proof}
See Appendix~\ref{appendix1}.
\end{proof}

\section{PLS Performance}
According to \eqref{eq6}, the required Eve's PDF and Bob's CDF to derive the PLS performance metrics are given below.
\subsection{SNR Distributions}
\vspace{-0.5mm}
\subsubsection{Distribution of $\gamma_\mathrm{E}$}
Using \eqref{eq10}, the PDF of the received SNR at $\mathrm{E}$ is obtained by carrying out a standard transformation of variables from \eqref{eq6}, i.e., $\gamma_\mathrm{E}=\overline\gamma_\mathrm{E}|X_\mathrm{E}|$. This yields,
\begin{equation}\label{eq12}
f_{\mathrm{E}}(\gamma_\mathrm{E})  =\frac{ 1}{\overline\gamma_\mathrm{E}\theta_\mathrm{E}}\exp\left(- \frac{\gamma_\mathrm{E}}{\overline\gamma_\mathrm{E}\theta_\mathrm{E}}  \right ).
\end{equation}
\subsubsection{Distribution of $\gamma_\mathrm{B}$}
Using \eqref{eq7}, and after a simple transformation of variables, i.e., $\gamma_\mathrm{B}=\overline\gamma_\mathrm{B}|X_\mathrm{B}|$, the CDF of the received SNR at B is obtained as
\begin{equation}\label{eq13}
F_{\mathrm{B}}(\gamma_\mathrm{B})  =\frac{\Upsilon\left( k_\mathrm{B}, \frac{\gamma_\mathrm{B}}{\overline{\gamma}_\mathrm{B}\theta_\mathrm{B}} \right)}{\Gamma\left( k_\mathrm{B} \right)}.
\vspace{-3.5mm}
\end{equation}
\subsection{SOP Analysis}
In this section, we consider the well-known wiretap PLS setup for passive eavesdropping (e.g., Eve only monitors the network by trying to intercept the messages), where the channel state information of Eve's channel is not available at Alice. In this scenario, Alice's only choice is to encode the data into codewords at a constant secrecy rate  $R_{\mathrm{S}}$. According to \cite{Barros}, the secrecy capacity is computed as $C_\mathrm{S}=\!\text{max}\left \{C_\mathrm{B}-C_\mathrm{E},0  \right \}$,
wherein $C_\mathrm{B}=\log_2(1+\gamma_\mathrm{B})$ and $C_\mathrm{E}=\log_2(1+\gamma_\mathrm{E})$
are the channel capacities at $\mathrm{B}$ and $\mathrm{E}$, respectively. Notice that secure communication can be guaranteed only in those instants when $R_\mathrm{S}\leq C_{\mathrm{S}}$, and is compromised otherwise, i.e., a secrecy outage occurs (e.g., data leakage to E). Mathematically, the $\text{SOP}=\Pr\left \{ C_\mathrm{S} < R_{\mathrm{S}}  \right \}$, and its tight lower bound $\text{SOP}_{\text{L}}$ can be defined as~\cite{Barros}
 \begin{align}\label{eq15}
\text{SOP}_{\text{L}}
 &=\int_{0}^{\infty}F_{\gamma_\mathrm{B}}\left ( 2^{R_{\mathrm{S}}} \gamma_\mathrm{E}\right )f_{\gamma_\mathrm{E}}(\gamma_\mathrm{E})d\gamma_\mathrm{E}.
\end{align}
\begin{proposition}\label{Propo2}
The $ \text{SOP}_{\text{L}}$ for RIS-assisted MISO wireless communications affected by EMI is obtained as
\begin{align}\label{eq17}
\text{SOP}_{\text{L}}=&\left ( \frac{\overline{\gamma}_\mathrm{E} \theta_\mathrm{E}}{\overline{\gamma}_\mathrm{B} \theta_\mathrm{B}} \right )^{k_\mathrm{B}}\frac{2^{k_\mathrm{B}R_{\mathrm{S}}}}{k_\mathrm{B} \mathcal{B}\left ( k_\mathrm{B},1 \right ) }  
\nonumber \\ & \times
{}_2F_1\left ( k_\mathrm{B}+1,k_\mathrm{B};1+k_\mathrm{B};- \frac{2^{R_{\mathrm{S}}}\overline{\gamma}_\mathrm{E} \theta_\mathrm{E}}{\overline{\gamma}_\mathrm{B} \theta_\mathrm{B}}\right ).
\end{align}
\end{proposition}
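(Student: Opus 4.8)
The plan is to evaluate the single integral in \eqref{eq15} directly by inserting the two closed-form densities already derived. First I would substitute Bob's CDF \eqref{eq13} and Eve's PDF \eqref{eq12} into \eqref{eq15}. Writing $a = 2^{R_\mathrm{S}}/(\overline{\gamma}_\mathrm{B}\theta_\mathrm{B})$ and $b = 1/(\overline{\gamma}_\mathrm{E}\theta_\mathrm{E})$ for brevity, and noting $F_\mathrm{B}(2^{R_\mathrm{S}}\gamma_\mathrm{E}) = \Upsilon(k_\mathrm{B}, a\gamma_\mathrm{E})/\Gamma(k_\mathrm{B})$, this reduces \eqref{eq15} to
\begin{equation}
\text{SOP}_{\text{L}} = \frac{b}{\Gamma(k_\mathrm{B})} \int_{0}^{\infty} \Upsilon\!\left(k_\mathrm{B}, a\gamma_\mathrm{E}\right) e^{-b\gamma_\mathrm{E}} \, d\gamma_\mathrm{E},
\end{equation}
i.e., a one-dimensional integral of the product of a lower incomplete gamma function and a decaying exponential.

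Next I would recognize this as a tabulated form, namely $\int_0^\infty e^{-s x}\,\Upsilon(\nu,t x)\,dx = \tfrac{t^\nu\,\Gamma(\nu+1)}{\nu\,(s+t)^{\nu+1}}\,{}_2F_1\!\left(1,\nu+1;\nu+1;\tfrac{t}{s+t}\right)$ (Gradshteyn--Ryzhik, Eq.~6.455.2), applied with $\nu=k_\mathrm{B}$, $t=a$, and $s=b$. Substituting and using $\Gamma(k_\mathrm{B}+1)=k_\mathrm{B}\Gamma(k_\mathrm{B})$ to cancel the $\Gamma(k_\mathrm{B})$ and $k_\mathrm{B}$ factors leaves
\begin{equation}
\text{SOP}_{\text{L}} = \frac{b\, a^{k_\mathrm{B}}}{(a+b)^{k_\mathrm{B}+1}}\, {}_2F_1\!\left(1, k_\mathrm{B}+1; k_\mathrm{B}+1; \tfrac{a}{a+b}\right).
\end{equation}

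To land on the exact form of \eqref{eq17}, I would then recast the hypergeometric factor via the Pfaff transformation ${}_2F_1(1,k_\mathrm{B}+1;k_\mathrm{B}+1;z) = (1-z)^{-(k_\mathrm{B}+1)}\,{}_2F_1(k_\mathrm{B},k_\mathrm{B}+1;k_\mathrm{B}+1;\tfrac{z}{z-1})$ evaluated at $z=a/(a+b)$, for which $z/(z-1)=-a/b$ and $(1-z)^{-(k_\mathrm{B}+1)}=((a+b)/b)^{k_\mathrm{B}+1}$. The powers of $(a+b)$ cancel, and after symmetrizing the first two hypergeometric parameters I obtain $(a/b)^{k_\mathrm{B}}\,{}_2F_1(k_\mathrm{B}+1,k_\mathrm{B};k_\mathrm{B}+1;-a/b)$. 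Re-expressing $a/b = 2^{R_\mathrm{S}}\overline{\gamma}_\mathrm{E}\theta_\mathrm{E}/(\overline{\gamma}_\mathrm{B}\theta_\mathrm{B})$, so that $(a/b)^{k_\mathrm{B}} = 2^{k_\mathrm{B}R_\mathrm{S}}(\overline{\gamma}_\mathrm{E}\theta_\mathrm{E}/\overline{\gamma}_\mathrm{B}\theta_\mathrm{B})^{k_\mathrm{B}}$, and inserting the identity $k_\mathrm{B}\mathcal{B}(k_\mathrm{B},1)=1$ (which follows from $\mathcal{B}(k_\mathrm{B},1)=\Gamma(k_\mathrm{B})/\Gamma(k_\mathrm{B}+1)=1/k_\mathrm{B}$) reproduces \eqref{eq17} verbatim.

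The main obstacle I anticipate is bookkeeping rather than conceptual: correctly matching the parameters of the tabulated integral and, above all, tracking the hypergeometric transformation (including the sign of its argument) so that the result collapses into the precise normalized form of \eqref{eq17}. I would also verify the convergence condition $\mathrm{Re}(s+t)>0$, which is automatic here since $a,b>0$, and note that the Pfaff-transformed argument $-a/b$ is negative, placing the result on the principal branch, so the representation is valid for all admissible system parameters.
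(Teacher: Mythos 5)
Your derivation is correct and lands on \eqref{eq17} exactly, but it is not what the paper does: the paper's proof is a one-line appeal to a ready-made Gamma-versus-Gamma SOP formula from the cited reference [Kong, Eq.~(7)], specialized to the case where Eve's shape parameter equals one --- which is precisely why the vestigial factor $k_\mathrm{B}\mathcal{B}(k_\mathrm{B},1)=1$ and a degenerate hypergeometric survive in \eqref{eq17}. You instead evaluate \eqref{eq15} from scratch via the tabulated integral and a Pfaff transformation; I checked the parameter matching, the sign of the transformed argument $-a/b$, and the cancellations, and they are all right. What your route buys is a self-contained verification of a result the paper only cites. What it also exposes, and what you could add in one more line, is that ${}_2F_1(k_\mathrm{B}+1,k_\mathrm{B};1+k_\mathrm{B};-z)=(1+z)^{-k_\mathrm{B}}$ because an upper parameter coincides with the lower one, so the entire expression collapses to the elementary form $\bigl(2^{R_\mathrm{S}}\overline{\gamma}_\mathrm{E}\theta_\mathrm{E}/(\overline{\gamma}_\mathrm{B}\theta_\mathrm{B}+2^{R_\mathrm{S}}\overline{\gamma}_\mathrm{E}\theta_\mathrm{E})\bigr)^{k_\mathrm{B}}$; equivalently, a single integration by parts gives $\int_0^\infty e^{-bx}\,\Upsilon(k_\mathrm{B},ax)\,dx=\Gamma(k_\mathrm{B})\,a^{k_\mathrm{B}}b^{-1}(a+b)^{-k_\mathrm{B}}$ directly, with no hypergeometric machinery at all.
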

\begin{proof}
$\text{SOP}_{\text{L}}$ can be obtained directly from~\cite[Eq.~(7)]{Kong} with the respective substitutions and after some manipulations.
\end{proof}
Notice that the $\text{SOP}_{\text{L}}$ \textit{without} EMI is also given by \eqref{eq17} as a by-product, by setting $\sigma_{\rm EMI}^2=0$ and substituting $k_i$ and $\theta_i$ for their corresponding ones in \eqref{eq8} and \eqref{eq11}.

\section{Numerical results and discussions} \label{sect:numericals}
We now evaluate the impact of EMI on the system secure performance, with special focus on the cases with EMI-aware and EMI-unaware eavesdropper; these are referred to as EA and EU in the sequel, for the sake of compactness. The phase noise term $\varphi_n$ for the RIS is modeled as a zero-mean Von Mises random variable (RV) with shape parameter $\kappa$ \cite{Badiu}, where a smaller $\kappa$ implies a larger phase error.
For all figures, a carrier frequency of 3 GHz is used, so $\lambda=0.1$m, and $P/\sigma_{i}^2=124$ dB, which corresponds to transmitting 20 dBm over 1 MHZ of bandwidth with a 10 dB noise figure, so that $\sigma_{i}^2=-104$ dBm. For all SOP traces, we set $R_{\mathrm{S}}=1$ bps/Hz. For Fig.~\ref{figen}, only the indirect RIS channels are considered, whereas Fig.~\ref{fig5} includes the presence of both direct and indirect channels. For convenience of discussion, we define $\rho=\tfrac{P \beta_1}{\sigma^2_{\mathrm{EMI}}}$ as in \cite{andrea} as the ratio between the signal power and EMI power at each of the RIS elements. Monte Carlo (MC) simulations are provided to double-check the validity of the approximations made throughout the analysis.

\begin{figure*}[ht!]
\vspace{-5.5mm}
\centering
\psfrag{A}[Bc][Bc][0.5]{$\rho=5$ dB}
\psfrag{B}[Bc][Bc][0.5]{$\rho=-5$ dB}
\psfrag{C}[Bc][Bc][0.5]{no EMI}
\psfrag{D}[Bc][Bc][0.5]{$\rho=5$ dB}
\subfigure[Average SNR vs. $N$.]{\includegraphics[width=0.32\textwidth]{./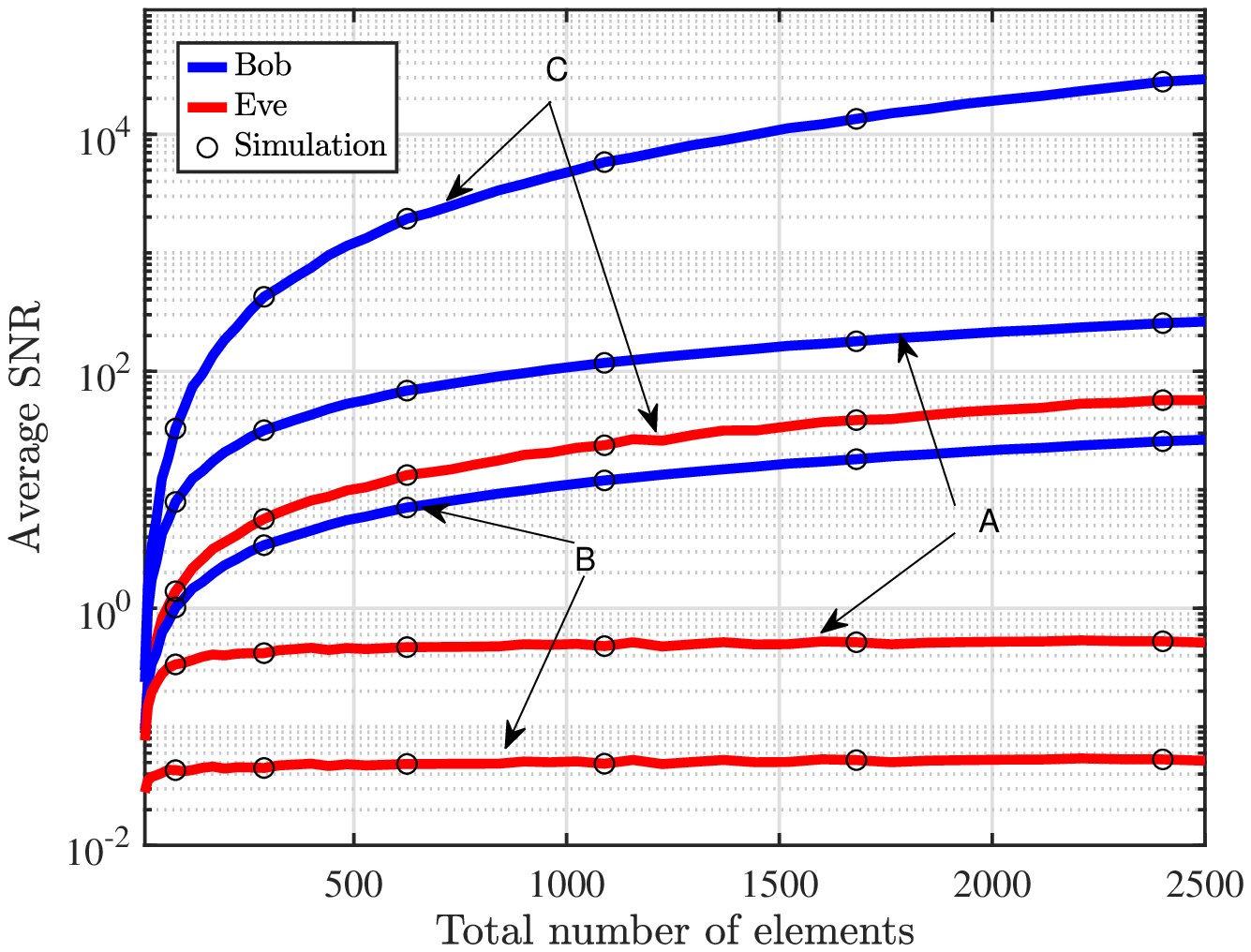}} 
\psfrag{E}[Bc][Bc][0.5]{$\rho=20$ dB}
\psfrag{A}[Bc][Bc][0.5]{$\rho=20$ dB}
\psfrag{B}[Bc][Bc][0.5]{$\mathrm{\textit{N}=324, 144}$}
\psfrag{C}[Bc][Bc][0.5]{$\sigma_{\text{EMI},\mathrm{B}}^2=-45$ dBm}
\psfrag{D}[Bc][Bc][0.5]{$\mathrm{\textit{N}=144, 324}$}
\subfigure[SOP vs. $\beta_{2,\mathrm{B}}$. ]{\includegraphics[width=0.32\textwidth]{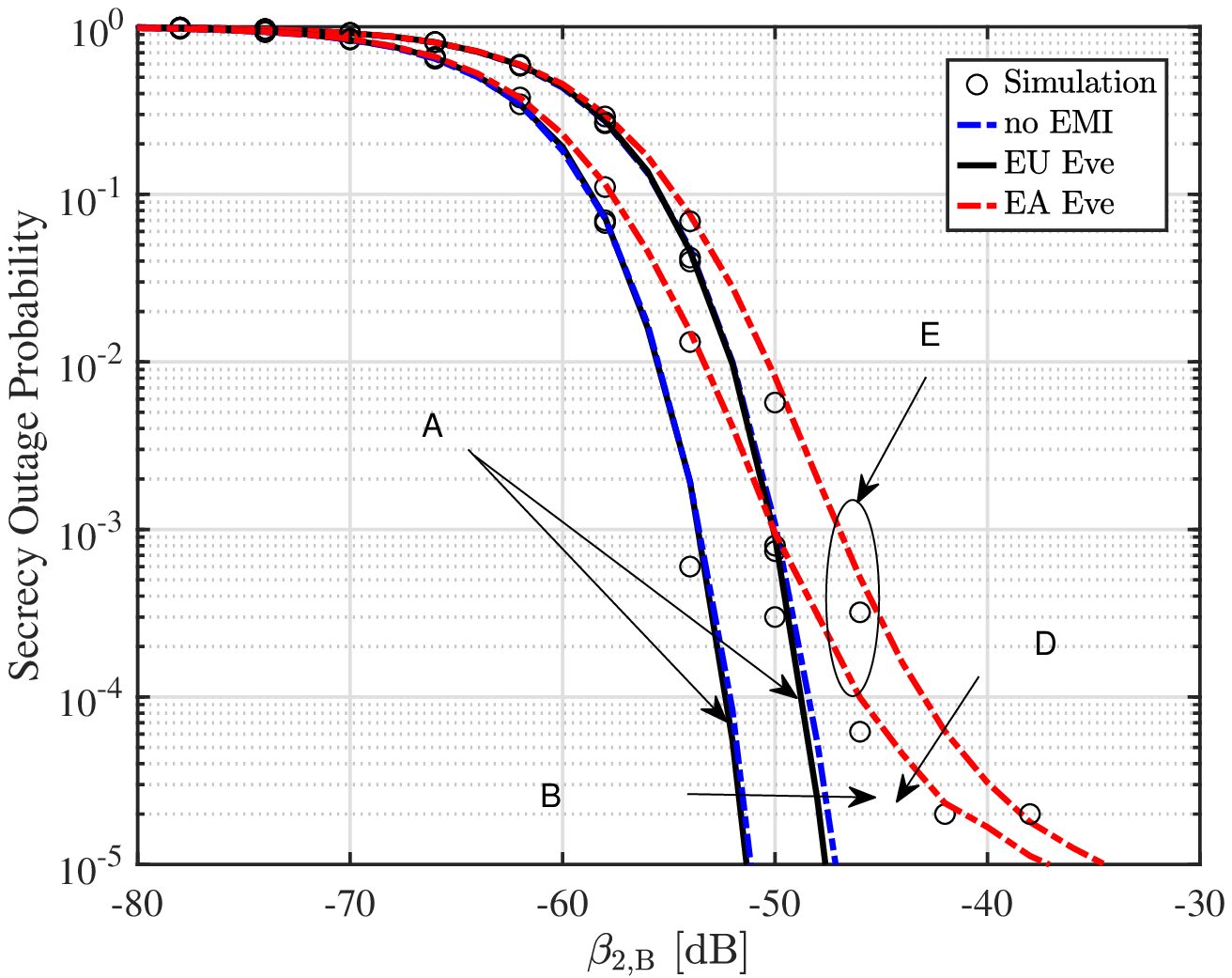}}
\psfrag{A}[Bc][Bc][0.5]{$\mathrm{\lambda/2}$}
\psfrag{B}[Bc][Bc][0.5]{$\mathrm{\lambda/5}$}
\psfrag{C}[Bc][Bc][0.5]{$\mathrm{\lambda/3}$}
\subfigure[SOP vs. $\rho$ for different inter-element spacing.]{\includegraphics[width=0.333\textwidth]{./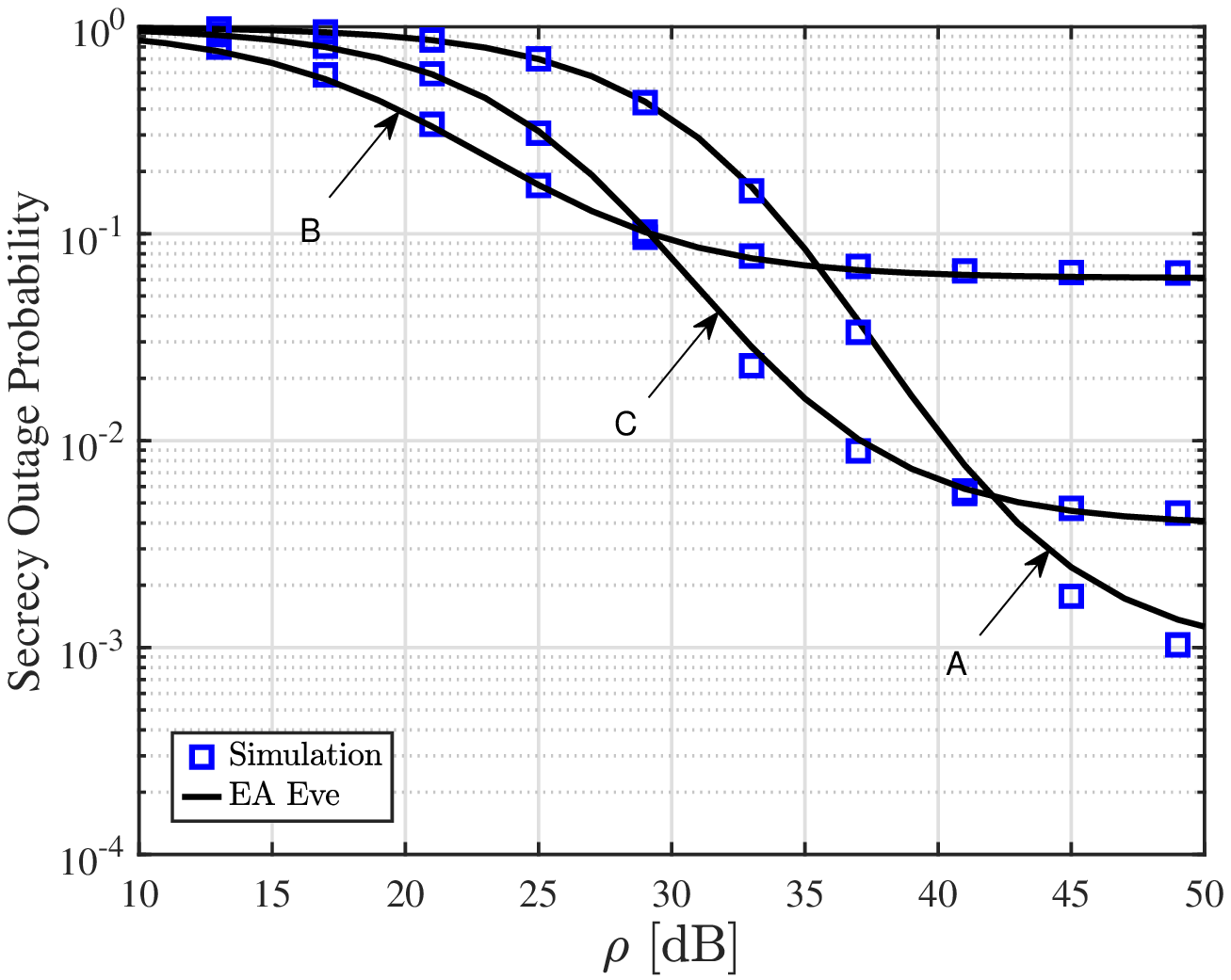}}
    \caption{SNR and SOP achieved for RIS-aided wiretap MISO system under EMI. In all figures, no direct link exists for $\mathrm{A}$-to-$\mathrm{B}$ or $\mathrm{A}$-to-$\mathrm{E}$. In~Figs.~\ref{figen}b-\ref{figen}c, the solid and dashed lines represent analytical solutions. Markers correspond to MC simulations.}
    \label{figen}
    \vspace{-4mm}
\end{figure*}

Fig.~\ref{figen}a illustrates the received average SNRs in the presence/absence of EMI, for both the legitimate and eavesdropper's links ($\mathrm{B}$ and $\mathrm{E}$, respectively) as a function of $N$. The system parameters are set to: $\kappa=3$, $A\beta_1=A \beta_{2,i}=-72$ dB, $M=4$, and the correlation matrix is built by setting $d_{\rm H}=d_{\rm V}=\lambda/4$. Here, we examine how the presence of EMI at the end nodes impacts the received SNRs. 
As pointed out in \cite{andrea} for the SISO case with perfect phase compensation, the average SNR at the legitimate receiver scales with $N^2$, i.e., it benefits from both the aperture and passive beamforming gains of the RIS. In the presence of EMI, the aperture gain (i.e., the ability of collecting energy proportionally to the size of the RIS) is cancelled out because it affects equally to the desired signal and the EMI; hence, Bob's SNR scales now with $N$. With regard to the case of Eve, she cannot benefit from any sort of passive beamforming gain, since the RIS is optimized only taking into account Bob's CSI. Thus, in the absence of EMI, Eve's SNR scales with $N$ (as stated in \cite{PLSRIS1}). When EMI affects the system, the aperture gain is also cancelled out and hence Eve's SNR tends to saturate (i.e., does not grow with $N$). We can also anticipate that in the EA scenario, since the scaling laws for the legitimate and eavesdropper's SNRs are coincident (i.e., scale with $N$), one of the key advantages of RIS for PLS is eliminated.

\begin{figure}[t]
\vspace{-2.5mm}
\centering
\psfrag{A}[Bc][Bc][0.5]{$\rho= 30, 25, 20$ dB}
\psfrag{B}[Bc][Bc][0.5]{$\rho=40$ dB}
 \includegraphics[width=65mm]{./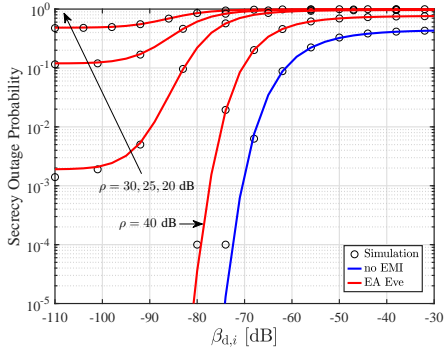}
\caption{SOP vs. $ \beta_{{\rm d},i}$ where both direct and indirect channels are present. Markers denote MC simulations whereas the solid lines represent analytical solutions.}
\label{fig5}
\vspace{-4.5mm}
\end{figure}

In Fig.~\ref{figen}b, we present the SOP vs. $\beta_{2,\mathrm{B}}$ for different values of $N$ and considering the EA/EU scenarios. The remaining system parameters are: $\kappa=7$, $A\beta_1=-47$ dB, $  A\beta_{2,\mathrm{E}}=-80$ dB, $M=5$,  $d_{\rm H}=d_{\rm V}=\lambda/3$, and $\rho=20$ dB. We see that the case without EMI and the EU scenario yield comparable secrecy performances, since the effect of EMI is similar for both the legitimate and illegitimate SNRs when the remaining parameters are fixed. However, in the EA scenario we see the effect of EMI severely deteriorates the secrecy performance, since Eve's ability to cancel out the interference causes the appearance of an irreducible SOP floor. This suggests the possibility of designing EMI-based attacks to the RIS with the help of a malicious jammer in collusion with Eve. In all instances, we note that the statistical approximations made in Section III closely match MC simulations.


Fig.~\ref{figen}c illustrates the effect of spatial correlation on the SOP as a function of $\rho$. The remaining parameters are set to: $N=100$ $\kappa=5$, $A\beta_1=A\beta_{2,i}=-58$ dB, and $M=2$. This figure explores the effects on the SOP of changing the area of the RIS by arranging the $N$ elements closer, which also increases spatial correlation, in the presence of EMI and assuming the EA scenario. We consider $d_{\rm H}=d_{\rm V} \in \left \{\lambda/2, \lambda/3,\lambda/5 \right \}$. Depending on the amount of EMI, two opposite behaviors are observed: in the high-EMI regime, i.e. lower values of $\rho$, a smaller RIS with closely arranged elements seems the preferred choice for having a better PLS performance. In this case, having a smaller RIS is beneficial since the amount of EMI collected by the RIS is lower. As $\rho$ grows, i.e. EMI power is reduced, and then having the RIS elements more separated starts becoming benefitial for PLS; in other words, the passive beamforming gain becomes dominant compared to the amount of EMI collected by the RIS. 

Finally,~Fig.~\ref{fig5} evaluates the SOP performance as a function of the direct path power $\beta_{{\rm d},i}$, i.e., the cases when both Bob and Eve experience a direct path. We assume the EA scenario with $\rho\in \left \{20, 25, 30, 40 \right \}$ dB, and include the case without EMI as a reference. The system parameter values are: $N=196$, $\kappa=5$, $A\beta_1=A \beta_{2,\mathrm{B}}=-52$ dB, $A\beta_{2,\mathrm{E}}=-62$ dB, $d_{\rm H}=d_{\rm V}=\lambda/4$, and $M=2$. Since in the EA scenario the EMI only affects the legitimate link, we see that the SOP improves as $\rho$ grows. We observe that as the LOS power for both links grows, (i.e., $\beta_{{\rm d},i}\uparrow$) the SOP tends to saturate to a large non-operational value due to EMI. For moderate values of $\rho$, a floor value for the SOP is also observed when $\beta_{{\rm d},i}\downarrow$, i.e., when the LOS power is not dominant. In this specific setup, assuming an operational SOP value of $10^{-3}$, this target PLS performance cannot be achieved when $\rho\in \left \{20, 25, 30 \right \}$ regardless of the LOS condition.
\vspace{-3mm}
\section{Conclusions}
The secrecy performance of RIS-assisted MISO wireless systems affected by EMI has been explored for the first time in the literature. As a first contribution, we provided simple but accurate approximations for the end-to-end RIS channels that incorporate key practical factors such as spatial correlation, phase-shift noise, the existence of direct and indirect channels, and EMI. We also discussed the effects of modifying the number and arrangement of the RIS elements on PLS performance, identifying the key relevance of Eve's EMI-awareness on system's secure performance. The sensitivity of RIS-assisted communications to EMI-based attacks with jammer-eavesdropper collusion is put forth, and the development of transmission and optimization techniques that improve PLS performance of RIS-assisted communications in the presence of EMI is identified as a challenging topic for future research.

\appendices

\section{Proof Propositions \ref{Propos1} and \ref{coro1}}
\label{appendix1} 
\subsection{Distribution of $X_\mathrm{B}$}
Let us match $X_\mathrm{B}$ to a Gamma distribution; hence, we have that its shape parameters are given by $k_\mathrm{B}=\tfrac{\left ( \mathbb{E}\left [ X_\mathrm{B} \right ] \right )^{2}}{\Var\left [ X_\mathrm{B} \right ]}$ and $\theta_\mathrm{B}=\tfrac{\left ( \Var\left [ X_\mathrm{B} \right ] \right ) }{\mathbb{E}\left [ X_\mathrm{B} \right ]}$. From \eqref{eq6}, let us define the RVs corresponding to the numerator and denominator as $Y_\mathrm{B}=| {{h_\mathrm{B}}}|^2 $ and $Z_\mathrm{B}=\tfrac{\delta_{\rm B} A \sigma^2_{\mathrm{EMI}}}{\sigma_{\mathrm{B}}^2} {\bf{h}}_{2,\mathrm{B}}^{\rm H}\bm{\Phi} {\bf{R}}{\bf{h}}_{2,\mathrm{B}} +1 $, respectively. Following a similar rationale as in \cite{Shah2000}, the 
desired and interfering signal terms in \eqref{eq1} are independent, provided that the interfering signal vector $\bm{\nu}$ is independent of the elements of the matrix ${\bf{G}}$ with circularly-symmetric Gaussian entries, and independent of the direct link  ${\bf{h}}_{{\rm{d}},\mathrm{B}}$. Hence, a lower bound of the mean value of $X_\mathrm{B}$ can be expressed as $\mathbb{E}[X_\mathrm{B}]=\mathbb{E}[Y_\mathrm{B}/Z_\mathrm{B}] \geq \mathbb{E}[Y_\mathrm{B}]/\mathbb{E}[Z_\mathrm{B}]$. From \eqref{eq2}, the mean value of the RV $Y_\mathrm{B}$ after some manipulations is
\vspace{-2mm}
\begin{align}\label{eqpropo1}
\mathbb{E}[Y_\mathrm{B}]=&\mathbb{E} 
\left [ \norm{ {\bf{h}}_{2,\mathrm{B}}^{\rm H}\bm{\Phi}{\bf{G}} + {\bf{h}}_{{\rm{d}},\mathrm{B}}^{\rm H}}^2\right ] \nonumber \\ \stackrel{(a)}{=}&\mathbb{E}\left [ \norm{ {\bf{h}}_{{\rm{d}},\mathrm{B}}^{\rm H} }^2 \right ]+\mathbb{E}\left [ \norm{{\bf{h}}_{2,\mathrm{B}}^{\rm H}\bm{\Phi}{\bf{G}}  }^2 \right ]\nonumber \\ =& \beta_{{\rm d},\mathrm{B}}M+A^2\beta_1 \beta_{2,\mathrm{B}}M \tr\left (  {\bf{R}}^{\rm H} \bm{\Phi}{\bf{R}}{\bf{R}}^{\rm H}\bm{\Phi}^{\rm H}{\bf{R}}\right ),
\end{align}
where step $a$, is given by the independence of the direct and indirect channels. 
The mean value of the RV $Z_\mathrm{B}$ is
\vspace{-2.5mm}
\begin{align}\label{eqpropo3}
\mathbb{E}[Z_\mathrm{B}]= &\mathbb{E}\left [ \tfrac{ \delta_{\rm B} A \sigma^2_{\mathrm{EMI}}}{\sigma_{\mathrm{B}}^2} {\bf{h}}_{2,\mathrm{B}}^{\rm H}\bm{\Phi} {\bf{R}}{\bf{h}}_{2,\mathrm{B}} +1\right ]
\nonumber \\ = &\tfrac{\delta_{\rm B} A^2 \sigma^2_{\mathrm{EMI}}}{\sigma_{\mathrm{B}}^2} \beta_{2,\mathrm{B}} \tr\left (  {\bf{R}}^{\rm H} \bm{\Phi}{\bf{R}}\bm{\Phi}^{\rm H}{\bf{R}}\right )+1.
\end{align}
Now, knowing that $\Var[X_\mathrm{B}]=\mathbb{E}[X_\mathrm{B}^2]-\mathbb{E}[X_\mathrm{B}]^2$, it is necessary to find $\mathbb{E}[X_\mathrm{B}^2]$. Hence, the second moment of the RV $Y_\mathrm{B}$ after some manipulations is
\begin{align}\label{eqpropo4}
\mathbb{E}[Y_\mathrm{B}^2]=&\mathbb{E} 
\left [ \norm{ {\bf{h}}_{2,\mathrm{B}}^{\rm H}\bm{\Phi}{\bf{G}} + {\bf{h}}_{{\rm{d}},\mathrm{B}}^{\rm H}}^4\right ] \nonumber \\ =& \mathbb{E}[  c ^2  ]+ 2\mathbb{E}[  d ^2  ]+ 2\mathbb{E}[ce] +\mathbb{E}[  e ^2 ],
\end{align}
where $c=||{\bf{h}}_{{\rm{d}},\mathrm{B}}^{\rm H}||^2$, $d=||{\bf{h}}_{{\rm{d}},\mathrm{B}}^{\rm H}{\bf{h}}_{2,\mathrm{B}}^{\rm H}\bm{\Phi}{\bf{G}}||$, and $e=||{\bf{h}}_{2,\mathrm{B}}^{\rm H}\bm{\Phi}{\bf{G}}||^2$. From~\cite[Eq.~(9)]{van}, we have that $\mathbb{E}[ c  ^2  ]=2\beta_{{\rm d},\mathrm{B}}^2M$. Also,
\vspace*{-2mm}
\begin{align}\label{eqpropo6}
\mathbb{E}[ d^2  ]=\beta_{{\rm d},\mathrm{B}}\beta_1 \beta_{2,\mathrm{B}}MA^2\tr\left (  {\bf{R}}^{\rm H} \bm{\Phi}{\bf{R}}{\bf{R}}^{\rm H}\bm{\Phi}^{\rm H}{\bf{R}}\right ).
\vspace*{-4mm}
\end{align}
From~\cite[Eqs.~(10,11)]{van}, we have that $\mathbb{E}[ d^2  ]=\mathbb{E}[ce]$ and
\begin{align}\label{eqpropo7}
\mathbb{E}[ e ^2  ]=&2|\beta_1 \beta_{2,\mathrm{B}}MA^2\tr\left (  {\bf{R}}^{\rm H} \bm{\Phi}{\bf{R}}{\bf{R}}^{\rm H}\bm{\Phi}^{\rm H}{\bf{R}}\right )|^2+
\nonumber \\ & 2(\beta_1 \beta_{2,\mathrm{B}}MA^2)^2 \tr \left ( \left ({\bf{R}}^{\rm H} \bm{\Phi}{\bf{R}}{\bf{R}}^{\rm H}\bm{\Phi}^{\rm H}{\bf{R}}  \right )^2 \right ).
\end{align}
 Then, the second moment of the RV $Z_\mathrm{B}$, is
\begin{align}\label{eqpropo9}
\mathbb{E}[Z_\mathrm{B}^2]= &\mathbb{E}\left [ \norm{ \tfrac{\delta_{\rm B} A \sigma^2_{\mathrm{EMI}}}{\sigma_{\mathrm{B}}^2} {\bf{h}}_{2,\mathrm{B}}^{\rm H}\bm{\Phi} {\bf{R}}{\bf{h}}_{2,\mathrm{B}} +1}^2\right ]=\mu^2 \eta+2\mu \eta+1,
\end{align}
in which $\mu=\tfrac{ \delta_{\rm B} A \sigma^2_{\mathrm{EMI}}}{\sigma_{\mathrm{B}}^2}$ and $\eta=A\beta_{2,\mathrm{B}} \tr\left (  {\bf{R}}^{\rm H} \bm{\Phi}{\bf{R}}\bm{\Phi}^{\rm H}{\bf{R}}\right )$. Finally, by combining \eqref{eqpropo1} to \eqref{eqpropo9}, the terms $k_\mathrm{B}$ and $\theta_\mathrm{B}$ can be attained
as in \eqref{eq8}. This completes the proof.
\vspace*{-2.5mm}
\subsection{Distribution of $X_\mathrm{E}$}
In this case, we claim that $X_\mathrm{E}$ can be approximated by an exponential RV with parameter $\theta_\mathrm{E}= \mathbb{E}\left [ X_\mathrm{E} \right ]$. Let us define the RVs $Y_\mathrm{E}=| {{h_\mathrm{E}}}|^2 $, and $Z_\mathrm{E}=\tfrac{\delta_{\rm E} A \sigma^2_{\mathrm{EMI}}}{\sigma_{\mathrm{E}}^2} {\bf{h}}_{2,\mathrm{E}}^{\rm H}\bm{\Phi} {\bf{R}}{\bf{h}}_{2,\mathrm{E}} +1 $. In the SISO case $Y_\mathrm{E}$ is exponentially distributed for $N\uparrow$ \cite{Badiu}, and independent of $Y_\mathrm{B}$ \cite{PLSRIS1}. This also applies to the MISO case with MRT beamforming based on Bob's CSI using the same rationale as in \cite{Shah2000}. A lower bound of the mean of $X_\mathrm{E}$ can be obtained as $\mathbb{E}[X_\mathrm{E}] \geq \mathbb{E}[Y_\mathrm{E}]/\mathbb{E}[Z_\mathrm{E}]$. The mean value of the RV $Z_\mathrm{E}$ is computed via \eqref{eqpropo3} with the respective substitutions and the expectation of the RV $Y_\mathrm{E}$ is given after some algebra by
\vspace{-2mm}
\begin{align}\label{eqpropo7}
\mathbb{E}[Y_\mathrm{E}]\geq& 
 \tfrac{\overset{T_1}{\overbrace{\mathbb{E}\left [ \norm{\left( {\bf{h}}_{2,\mathrm{E}}^{\rm H}\bm{\Phi}{\bf{G}} + {\bf{h}}_{{\rm{d}},\mathrm{E}}^{\rm H}\right)
 \left ( {{\bf{G}}^{\rm H}\bm{\Phi}^{\rm H}\bf{h}}_{2,\mathrm{B}} + {\bf{h}}_{{\rm{d}},\mathrm{B}}\right )}^2\right ]}}}{\mathbb{E}\left [\norm{ {\bf{h}}_{2,\mathrm{B}}^{\rm H}\bm{\Phi}{\bf{G}} + {\bf{h}}_{{\rm{d}},\mathrm{B}}^{\rm H}}^2 \right ]}  
\end{align}
where the numerator part in \eqref{eqpropo7} is calculated using \eqref{eqpropo1} and $T_1$ can be expressed as   
\vspace{-2mm}
\begin{align}\label{eqpropo8}
T_1 \stackrel{(b)}{=}&\mathbb{E}[ p ]+ \mathbb{E}[  q   ]+ \mathbb{E}[r] +\mathbb{E}[ s ],
\end{align}
in which step $b$, is given by the independence of Bob and Eve's channels \cite{PLSRIS1}, and $p=||{\bf{h}}_{{2},\mathrm{E}}^{\rm H}\bm{\Phi}{\bf{G}}{\bf{G}}^{\rm H}\bm{\Phi}^{\rm H}{\bf{h}}_{2,\mathrm{B}}||$, $q=||{\bf{h}}_{{\rm{d}},\mathrm{E}}^{\rm H} {\bf{G}}^{\rm H}  \bm{\Phi}^{\rm H}{\bf{h}}_{2,\mathrm{B}}||$, $r=||{\bf{h}}_{{2},\mathrm{E}}^{\rm H}\bm{\Phi}{\bf{G}}{\bf{h}}_{\rm{d},\mathrm{B}}||$, and $s=||{\bf{h}}_{{\rm{d}},\mathrm{E}}^{\rm H}{\bf{h}}_{{\rm{d}},\mathrm{B}}||$. Finally, it follows that, $\mathbb{E}[ p ]=A^4\beta_1^2 \beta_{2,\mathrm{B}}\beta_{2,\mathrm{E}}M \tr\left (\Upsilon \Upsilon  \right )$, $\mathbb{E}[  q   ]=A^2\beta_1 \beta_{\rm{d},\mathrm{E}}\beta_{2,\mathrm{B}}M \tr\left ({\bf{R}}^{\rm H}\bm{\Phi}^{\rm H}{\bf{R}}{\bf{R}}^{\rm H}\bm{\Phi} {\bf{R}}\right )$, $ \mathbb{E}[r]=A^2\beta_1 \beta_{2,\mathrm{E}}\beta_{\rm{d},\mathrm{B}}M \tr\left (\Upsilon\right )$, and $\mathbb{E}[ s ]=M\beta_{\rm{d},\mathrm{B}}\beta_{\rm{d},\mathrm{E}}$ for $\Upsilon={\bf{R}}^{\rm H} \bm{\Phi}{\bf{R}}{\bf{R}}^{\rm H}\bm{\Phi}^{\rm H}{\bf{R}}$.




\bibliographystyle{ieeetr}
\bibliography{bibfile}

\end{document}